\renewcommand{\subset}{\subseteq}
\tikzstyle{stretchy}=[decorate, decoration={snake, segment length=.2cm, amplitude=.05cm}]
\tikzstyle{medge}=[line width = 2pt]
\tikzstyle{vertex}=[inner sep = 0pt, minimum width=4pt, draw=black, fill=black, shape=circle]
\tikzstyle{whitevert}=[inner sep = 0pt, minimum width=4pt, draw=black, fill=white, shape=circle]
\tikzstyle{squarevert}=[inner sep = 0pt, minimum width=4pt, minimum height=4pt, fill=white, shape=rectangle, draw=black, thick]
\tikzstyle{trivert}=[inner sep = 0pt, minimum width=6pt, minimum height=6pt, fill=gray!50!white, shape=regular polygon,regular polygon sides=3, draw=black, thick]
\tikzstyle{pentvert}=[inner sep = 0pt, minimum width=6pt, minimum height=6pt, fill=gray!15!white, shape=regular polygon,regular polygon sides=5, draw=black, thick]
\newcommand{\gpoint}[2]{\node[style=vertex, label=#1:$#2$]}
\newcommand{\vpoint}[2]{\node[style=squarevert, label=#1:$#2$]}
\newcommand{\wpoint}[2]{\node[style=whitevert, label=#1:$#2$]}
\newcommand{\bpoint}[1]{\gpoint{below}{#1}}
\newcommand{\apoint}[1]{\gpoint{above}{#1}}
\newcommand{\lpoint}[1]{\gpoint{left}{#1}}
\newcommand{\rpoint}[1]{\gpoint{right}{#1}}
\newcommand{\bvoint}[1]{\vpoint{below}{#1}}
\newcommand{\bwoint}[1]{\wpoint{below}{#1}}
\newcommand{\avoint}[1]{\vpoint{above}{#1}}
\newcommand{\awoint}[1]{\wpoint{above}{#1}}
\newcommand{\sizeof}[1]{\left\lvert{#1}\right\rvert}
\newtheorem{proposition}{Proposition}[section]
\newtheorem{theorem}[proposition]{Theorem}
\newtheorem{lemma}[proposition]{Lemma}
\newtheorem{corollary}[proposition]{Corollary}
\theoremstyle{remark}
\newtheorem{question}[proposition]{Question}
\newcommand{\DM}{DM}
\newcommand{\SDM}{SDM}
\theoremstyle{definition}
\newtheorem{definition}[proposition]{Definition}
\title{Complexity of a Disjoint Matching Problem on Bipartite Graphs}
\author{Gregory J.~Puleo}
\newcommand{\st}{\colon\,}
\begin{document}
\begin{abstract}
  We consider the following question: given an $(X,Y)$-bigraph $G$ and
  a set $S \subset X$, does $G$ contain two disjoint matchings $M_1$
  and $M_2$ such that $M_1$ saturates $X$ and $M_2$ saturates $S$?
  When $\sizeof{S}\geq\sizeof{X}-1$, this question is solvable by
  finding an appropriate factor of the graph. In contrast, we show
  that when $S$ is allowed to be an arbitrary subset of $X$, the
  problem is NP-hard.
\end{abstract}
\maketitle
\section{Introduction}
A \emph{matching} in a graph $G$ is a set of pairwise disjoint edges.
A matching \emph{covers} a vertex $v \in V(G)$ if $v$ lies in some
edge of the matching, and a matching \emph{saturates} a set $S \subset
V(G)$ if it covers every vertex of $S$.

An $(X,Y)$-bigraph is a bipartite graph with partite sets $X$ and
$Y$. The fundamental result of matching theory is Hall's
Theorem~\cite{HallsTheorem}, which states that an $(X,Y)$-bigraph contains a
matching that saturates $X$ if and only if $\sizeof{N(S)} \geq
\sizeof{S}$ for all $S \subset X$. While Hall's Theorem does not
immediately suggest an efficient algorithm for finding a maximum
matching, such algorithms have been discovered and are well-known
\cite{bipmatch1, bipmatch2}.

A natural way to extend Hall's Theorem is to ask for necessary and
sufficient conditions under which \emph{multiple} disjoint matchings
can be found. This approach was taken by Lebensold, who obtained
the following generalization of Hall's~Theorem.
\begin{theorem}[Lebensold~\cite{Lebensold}]\label{thm:leben}
  An $(X,Y)$-bigraph has $k$ disjoint matchings, each saturating $X$,
  if and only if
  \begin{equation}
    \label{eq:lebensold}\sum_{y \in Y}\min\{k, \sizeof{N(y) \cap S}\} \geq k\sizeof{S} 
  \end{equation}
  for all $S \subset X$.
\end{theorem}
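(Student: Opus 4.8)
The plan is to replace the $k$ disjoint matchings by a single degree-constrained subgraph and then settle the subgraph question with the max-flow min-cut theorem. \textbf{Step 1 (reduction).} The union $H = M_1 \cup \cdots \cup M_k$ of $k$ disjoint matchings, each saturating $X$, is exactly a subgraph of $G$ with $\deg_H(x) = k$ for every $x \in X$ and $\deg_H(y) \le k$ for every $y \in Y$: the forward implication is immediate from disjointness, and conversely K\"onig's edge-coloring theorem properly colors the edges of such an $H$ using $\Delta(H) \le k$ colors, and the color classes --- padded by empty matchings to make exactly $k$ of them --- are disjoint matchings each of which, meeting every $x \in X$, saturates $X$. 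So it suffices to decide when $G$ has a subgraph with these degree constraints.

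\textbf{Step 2 (flow model).} Form a network with a source $s$ joined to each $x \in X$ by an arc of capacity $k$, each edge $xy$ of $G$ directed $x \to y$ with capacity $1$, and each $y \in Y$ joined to a sink $t$ by an arc of capacity $k$. By the integrality theorem for network flows, integral $s$--$t$ flows of value $k\sizeof{X}$ correspond exactly to the subgraphs $H$ sought in Step 1 --- the unit capacities on the middle arcs keep $H$ simple, and value $k\sizeof{X}$ forces every source arc to be saturated, giving $\deg_H(x) = k$. Since $k\sizeof{X}$ is the capacity of the cut that isolates $s$, a flow of that value exists if and only if \emph{every} $s$--$t$ cut has capacity at least $k\sizeof{X}$.

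\textbf{Step 3 (cut analysis).} A cut is determined by the set of vertices on the source side; write $S \subset X$ and $B \subset Y$ for those vertices. Its capacity is $k\sizeof{X \setminus S} + k\sizeof{B} + e_G(S, Y \setminus B)$, and for fixed $S$ it is minimized by putting $y$ on the source side precisely when $\sizeof{N(y) \cap S} > k$, which gives $k\sizeof{X \setminus S} + \sum_{y \in Y}\min\{k, \sizeof{N(y)\cap S}\}$. Requiring this to be at least $k\sizeof{X}$ for every $S \subset X$ is exactly inequality~\eqref{eq:lebensold}. (Necessity of \eqref{eq:lebensold} is also immediate on its own: the $k\sizeof{S}$ edges of $H$ meeting $S$ include at most $\min\{k,\sizeof{N(y)\cap S}\}$ of them at each $y \in Y$, since $y$ lies on at most $k$ edges of $H$ and on at most $\sizeof{N(y) \cap S}$ edges running into $S$.)

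I do not expect a serious obstacle: the whole argument lives in choosing the right degree-constrained subgraph and transcribing Steps 2--3 faithfully (a known $f$-factor theorem for bipartite graphs could be substituted for the flow argument). The delicate point is the bookkeeping in Step 3 --- getting the orientation of the inequality right and confirming that $k\sizeof{X}$ genuinely is the largest possible flow value, so that the target condition is ``max flow $= k\sizeof{X}$'' rather than merely ``max flow is large'' --- together with remembering that Step 1 rests on K\"onig's line-coloring theorem, not on anything matching-specific.
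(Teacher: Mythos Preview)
The paper does not supply its own proof of Theorem~\ref{thm:leben}; it is quoted as Lebensold's result, with the remarks that it is equivalent to a theorem of Fulkerson and is a special case of Lov\'asz's $(g,f)$-factor theorem. So there is nothing to compare against directly.

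That said, your argument is correct and is very much in the spirit of what the paper gestures at. Your Step~1 is exactly the reduction the paper later uses in Theorem~\ref{thm:minus1}: the union of the matchings is a $(g,f)$-factor with $g\equiv f\equiv k$ on $X$ and $g\equiv 0$, $f\equiv k$ on $Y$, and K\"onig's line-coloring theorem recovers the matchings from the factor. Your Steps~2--3 then prove the bipartite $(g,f)$-factor criterion in this special case via max-flow/min-cut, which the paper alludes to when it says the bipartite problem ``can also be reduced to a feasible-flow problem.'' The cut computation is clean and the bookkeeping is right: the minimum over $B$ for fixed $S$ really is $k\sizeof{X\setminus S}+\sum_{y}\min\{k,\sizeof{N(y)\cap S}\}$, and comparing with $k\sizeof{X}$ yields \eqref{eq:lebensold}. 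One cosmetic point: the ``padded by empty matchings'' clause in Step~1 is never needed when $X\neq\emptyset$, since then $\Delta(H)=k$ exactly and each of the $k$ color classes already saturates $X$; the phrasing is harmless but slightly misleading.
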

When $k=1$, the left side of \eqref{eq:lebensold} is just
$\sizeof{N(S)}$, so Theorem~\ref{thm:leben} contains Hall's~Theorem as
a special case. As observed by Brualdi, Theorem~\ref{thm:leben} is
equivalent to a theorem of Fulkerson~\cite{Fulkerson} about disjoint
permutations of $0,1$-matrices. Theorem~\ref{thm:leben} is also a
special case of Lovasz's $(g,f)$-factor theorem~\cite{Lovasz-gf}.
Like Hall's Theorem, Theorem~\ref{thm:leben} does not immediately
suggest an efficient algorithm, but efficient algorithms exist for
solving the $(g,f)$-factor problem~\cite{Gabow}, and these algorithms
can be applied to find the desired $k$ disjoint matchings. We discuss
the algorithmic aspects further in Section~\ref{sec:easycase}.

A different extension was considered by Frieze~\cite{frieze}, who
considered the following problem:
\begin{quote}
  \textbf{Disjoint Matchings (DM)}\\
  \textbf{Input:} Two $(X,Y)$-bigraphs $G_1$, $G_2$ on the same vertex set. \\
  \textbf{Question:} Are there matchings $M_1 \subset G_1$, $M_2 \subset G_2$
  such that $M_1 \cap M_2 = \emptyset$ and each $M_i$ saturates $X$?
\end{quote}
When $G_1=G_2$, this problem is just the $k=2$ case of the problem
considered by Lebensold, and is therefore polynomially solvable. On
the other hand, Frieze proved that the Disjoint Matchings problem is
NP-hard in general.

In this paper, we consider the following disjoint-matching problem, which
can be naturally viewed as a restricted case of the Disjoint Matchings problem:
\begin{quote}
  \textbf{Single-Graph Disjoint Matchings (SDM)}\\
  \textbf{Input:} An $(X,Y)$-bigraph $G$ and a vertex set $S \subset X$. \\
  \textbf{Question:} Are there matchings $M_1, M_2 \subset G$ such that
  $M_1 \cap M_2 = \emptyset$, $M_1$ saturates $X$, and $M_2$ saturates $S$?
\end{quote}
We call such a pair $(M_1, M_2)$ an \emph{$S$-pair}.  When $S=X$, this
problem is also equivalent to the $k=2$ case of Lebensold's problem.
The problem \SDM{} is similar to a problem considered by Kamalian and
Mkrtchyan~\cite{kamalian}, who proved that the following problem is
NP-hard:
\begin{quote}
  \textbf{Residual Matching}\\
  \textbf{Input:} An $(X,Y)$-bigraph $G$ and a nonnegative integer $k$. \\
  \textbf{Question:} Are there matchings $M_1, M_2 \subset G$ such that
  $M_1 \cap M_2 = \emptyset$, $M_1$ is a maximum matching, and $\sizeof{M_2} \geq k$?
\end{quote}
When $G$ has a perfect matching, we can think of the Residual Matching
problem as asking whether there is \emph{some} $S \subset X$ with
$\sizeof{S} = k$ such that $G$ has an $S$-pair. In contrast, the
\SDM{} problem asks whether some \emph{particular} $S$ admits an
$S$-pair. Since $k$ is part of the input to the Residual Matching problem,
it is \emph{a priori} possible that \SDM{} could be polynomially solvable
while the Residual Matching problem is NP-hard, since one might need to
check exponentially many candidate sets $S$.

In Section~\ref{sec:reduce}, we give a quick reduction from \SDM{} to
\DM{}, justifying the view of \SDM{} as a special case of \DM{}, and
in Section~\ref{sec:twomatch} we show that \SDM{} is NP-hard, thereby
strengthening Frieze's result. In Section~\ref{sec:easycase} we show
that \SDM{} is polynomially solvable under the additional restriction
$\sizeof{S} \geq \sizeof{X}-1$.

\section{Reducing \SDM{} to \DM{}}\label{sec:reduce}
In this section, we show that any instance of \SDM{} with $\sizeof{S} < \sizeof{X}-1$
reduces naturally to an instance of \DM{}. Since \SDM{}-instances with $\sizeof{S} \geq \sizeof{X}-1$
are polynomially solvable, as we show in Section~\ref{sec:easycase}, this justifies the claim 
that \SDM{} is a special case of \DM{}.
\begin{theorem}
  Let $G$ be an $(X,Y)$-bigraph and let $S \subset V(G)$ with $\sizeof{S} < \sizeof{X}-1$. Construct
  graphs $G_1, G_2$ as follows:
  \begin{align*}
    V(G_1) &= V(G_2) = V(G), \\
    E(G_1) &= E(G), \\
    E(G_2) &= E(G) \cup \{xy \st x \in X-S,\ y \in Y\}.
  \end{align*}
  The graph $G$ has an $S$-pair if and only if there are disjoint
  matchings $M_1, M_2$ contained in $G_1, G_2$ respectively, each
  saturating $X$.
\end{theorem}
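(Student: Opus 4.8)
The claim is a simple equivalence, and the natural approach is to prove both directions by exhibiting the required matchings directly. The key observation is that $E(G_2)$ differs from $E(G)$ only by adding \emph{all} edges from $X - S$ to $Y$; so a matching in $G_2$ that saturates $X$ is morally "a matching saturating $S$ inside $G$, extended arbitrarily on $X - S$ using the free edges." I would make this precise as follows.

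For the forward direction, suppose $G$ has an $S$-pair $(M_1, M_2)$. Then $M_1 \subseteq E(G) = E(G_1)$ saturates $X$, so $M_1$ works as the first matching. The matching $M_2$ saturates $S$ but need not saturate $X - S$; however, since $M_2 \subseteq M_1^c$ uses at most $|S|$ vertices of $Y$ and $|Y| \geq |X| > |S| + 1$, there are plenty of $Y$-vertices left over, and one can greedily extend $M_2$ to a matching $M_2'$ of $G_2$ saturating $X$ by matching each vertex of $X - S$ not already covered to a distinct uncovered vertex of $Y$, using the newly added edges — here I must also keep $M_2'$ disjoint from $M_1$, which is easy since $M_1$ covers only $|X|$ vertices of $Y$ and again $|Y|$ is large enough (this is exactly where the hypothesis $|S| < |X| - 1$, equivalently $|Y| \geq |X| \geq |S| + 2$, is used — note $|Y| \geq |X|$ is forced by the existence of $M_1$). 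Then $(M_1, M_2')$ is the desired pair in $(G_1, G_2)$.

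For the reverse direction, suppose $M_1 \subseteq G_1$, $M_2 \subseteq G_2$ are disjoint and each saturates $X$. Since $E(G_1) = E(G)$, $M_1$ is already a matching of $G$ saturating $X$. For $M_2$, set $M_2' = M_2 \cap E(G)$, i.e.\ delete all the "extra" edges (those joining $X - S$ to $Y$ that were not in $G$). Every vertex of $S$ is matched by $M_2$ via an edge incident to a vertex of $S$, and every such edge lies in $E(G)$ (the extra edges are incident only to $X - S$), so $M_2'$ still saturates $S$; and $M_2' \subseteq M_2$ is still disjoint from $M_1$. Hence $(M_1, M_2')$ is an $S$-pair in $G$.

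The only genuinely non-routine point is the greedy extension in the forward direction and the accompanying counting: one must verify that enough vertices of $Y$ remain available after accounting for both $M_1$ and the part of $M_2$ already covering $S$. I would handle this by noting that $M_1$ and $M_2$ together cover at most $|X| + |S|$ vertices of $Y$, while the number of vertices of $X - S$ still needing to be matched is at most $|X| - |S|$; since $|Y| \geq |X|$ and using all edges from $X-S$ to $Y$ are available, a straightforward application of Hall's Theorem (or a direct greedy argument) on the bipartite graph between the uncovered part of $X - S$ and the $Y$-vertices avoided by $M_1$ finishes it. I would double-check the edge case where $|S| = |X| - 2$ to confirm the inequalities are not tight in a bad way, but I expect no difficulty there.
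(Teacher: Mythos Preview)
Your overall approach matches the paper's: the reverse direction is handled by restricting $M_2$ to the edges meeting $S$, and the forward direction is handled by extending $M_2$ over $X-S$ via Hall's theorem. However, your counting in the forward direction contains a genuine error. You propose to apply Hall to the bipartite graph between $X-S$ and ``the $Y$-vertices avoided by $M_1$,'' and earlier you justify disjointness from $M_1$ by saying ``$M_1$ covers only $\sizeof{X}$ vertices of $Y$ and $\sizeof{Y}$ is large enough.'' But nothing forces $\sizeof{Y} > \sizeof{X}$: when $\sizeof{Y} = \sizeof{X}$, the matching $M_1$ covers \emph{every} vertex of $Y$, so the set of $Y$-vertices avoided by $M_1$ is empty and your proposed auxiliary graph has no edges at all. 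Likewise, the bound ``$M_1$ and $M_2$ together cover at most $\sizeof{X}+\sizeof{S}$ vertices of $Y$'' is useless here, since it would require $\sizeof{Y} \geq 2\sizeof{X}$ to leave enough room.

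The fix is that disjointness from $M_1$ is an \emph{edge} constraint, not a vertex constraint: each $x \in X-S$ is forbidden only from its own $M_1$-mate, not from all $M_1$-covered vertices. The paper sets this up by taking $Y'$ to be the $Y$-vertices not covered by $M_2$ and forming $H = G_2[(X-S) \cup Y'] - M_1$. For a singleton $X_0=\{x\}$ one has $\sizeof{N_H(X_0)} \geq \sizeof{Y'}-1 \geq \sizeof{X}-\sizeof{S}-1 \geq 1$ (this is exactly where $\sizeof{S} < \sizeof{X}-1$ is used), and for $\sizeof{X_0} \geq 2$ every $y \in Y'$ survives as a neighbor of some $x\in X_0$, so $N_H(X_0)=Y'$ and $\sizeof{Y'} \geq \sizeof{X}-\sizeof{S} \geq \sizeof{X_0}$. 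With this correction your plan is complete and coincides with the paper's proof.
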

\begin{proof}
  If $\sizeof{Y} < \sizeof{X}$, then it is clear that $G$ has no
  $S$-pair and that $G_1, G_2$ do not have perfect matchings, so
  assume that $\sizeof{Y} \geq \sizeof{X}$.

  First suppose that $M_1, M_2$ are disjoint matchings contained in
  $G_1, G_2$ respectively, each saturating $X$. Let $M'_1 = M_1$ and
  let $M'_2 = \{e \in M_2 \st e \cap X \subset S\}$.  It is clear that
  $(M'_1, M'_2)$ is an $S$-pair.

  Now suppose that we are given an $S$-pair $(M'_1, M'_2)$. In order
  to obtain the matchings $M_1, M_2$ in $G_1, G_2$ as needed, we need
  to enlarge $M'_2$ so that it saturates all of $X$, rather than only
  saturating $S$. Let $Y' = \{y \in Y \st y \notin V(M'_2)\}$, and let
  $H = G_2[(X-S) \cup Y'] - M'_1$.

  We claim that $H$ has a matching that saturates $X-S$, and prove
  this by verifying Hall's Condition. Let any $X_0 \subset X-S$ be
  given.  If $\sizeof{X_0} = 1$, say $X_0 = \{x_0\}$, then $N_H(X_0)$
  contains all of $Y'$ except possibly the mate of $x_0$ in $M_1$.
  Hence
  \[ \sizeof{N_H(X_0)} \geq \sizeof{Y'}-1 = \sizeof{Y} - \sizeof{S} - 1 \geq \sizeof{X} - \sizeof{S} - 1 \geq 1 = \sizeof{X_0}, \]
  as desired. On the other hand, if $\sizeof{X_0} \geq 2$, then $N_H(X_0)$
  contains all of $Y'$, so that
  \[ \sizeof{N_H(X_0)} = \sizeof{Y'} = \sizeof{Y} - \sizeof{S} \geq
  \sizeof{X} - \sizeof{S} \geq \sizeof{X_0}. \] Hence Hall's Condition
  holds for $H$. Now let $M$ be a perfect matching in $H$, let $M_1 =
  M'_1$, and let $M_2 = M'_2 \cup M$. By construction, $M_2$ is a
  matching in $G_2$ that saturates $X$. It is clear that $M_1 \cap M_2 =
  \emptyset$, since the edges in $M'_1$ were omitted from $H$. Hence $M_1$
  and $M_2$ are as desired.
\end{proof}

\section{Finding Two Matchings is NP-Hard}\label{sec:twomatch}
Given an instance $(G,S)$ of \SDM{}, we call a pair of matchings
$(M_1, M_2)$ satisfying the desired condition an \emph{$S$-pair}. When
$G'$ is a subgraph of $G$ and $S' = S \cap V(G')$, we say that an
$S$-pair $(M_1, M_2)$ \emph{contains} an $S'$-pair $(M'_1, M'_2)$ if
$M'_1 \subset M_1$ and $M'_2 \subset M_2$.

We prove that \SDM{} is NP-hard via a reduction from 3SAT.  Let $c_1,
\ldots, c_s$ be the clauses and $\theta_1, \ldots, \theta_t$ be the
variables of an arbitrary 3SAT instance. We define a graph $G$ as
follows.

For each variable $\theta_i$, let $H_i$ be a copy of the cycle $C_{4s}$, with vertices
$v_{i,1}, \ldots, v_{i,4s}$ written in order. Define
\begin{align*}
  X_i &= \{ v_{i,j} \st \text{$j$ is even} \}, \\
  S_i &= \{ v_{i,j} \st \text{$j \equiv 2 \pmod{4}$} \}.
\end{align*}
Since $H_i$ is an even cycle, it has exactly two perfect matchings,
one containing the edge $v_{i,1}v_{i,2}$ and the other containing the
edge $v_{i,2}v_{i,3}$. In an $S_i$-pair $(M_1, M_2)$ for $H_i$, we
have $v_{i,1}v_{i,2} \in M_1$ if and only if $v_{i,2}v_{i,3} \in M_2$,
and the same argument holds for the other vertices of $S_i$. Thus,
$H_i$ has only two possible $S_i$-pairs, illustrated in
Figure~\ref{fig:hipairs}. We call these pairs the \emph{true pair} and
\emph{false pair} for $H_i$.
\begin{figure}
  \centering
  \begin{tabular}{cc}
    \begin{tikzpicture}[scale=1.5]
      \apoint{v_{i,1}} (v1) at (90 : 1cm) {};
      \avoint{v_{i,2}} (v2) at (45 : 1cm) {};
      \rpoint{v_{i,3}} (v3) at (0 : 1cm) {};
      \bwoint{v_{i,4}} (v4) at (-45 : 1cm) {};
      \bpoint{v_{i,5}} (v5) at (-90 : 1cm) {};
      \bvoint{v_{i,6}} (v6) at (-135 : 1cm) {};
      \lpoint{v_{i,7}} (v7) at (-180 : 1cm) {};
      \awoint{v_{i,8}} (v8) at (-225 : 1cm) {};
      \draw[medge] (v1) -- (v2);
      \draw[medge] (v3) -- (v4);
      \draw[medge] (v5) -- (v6);
      \draw[medge] (v7) -- (v8);
      \draw[stretchy] (v2) -- (v3);
      \draw[stretchy] (v6) -- (v7);
      \draw (v1) -- (v8);
      \draw (v4) -- (v5);
    \end{tikzpicture}
    &\begin{tikzpicture}[scale=1.5]
      \apoint{v_{i,1}} (v1) at (90 : 1cm) {};
      \avoint{v_{i,2}} (v2) at (45 : 1cm) {};
      \rpoint{v_{i,3}} (v3) at (0 : 1cm) {};
      \bwoint{v_{i,4}} (v4) at (-45 : 1cm) {};
      \bpoint{v_{i,5}} (v5) at (-90 : 1cm) {};
      \bvoint{v_{i,6}} (v6) at (-135 : 1cm) {};
      \lpoint{v_{i,7}} (v7) at (-180 : 1cm) {};
      \awoint{v_{i,8}} (v8) at (-225 : 1cm) {};
      \draw[medge] (v2) -- (v3);
      \draw[medge] (v4) -- (v5);
      \draw[medge] (v6) -- (v7);
      \draw[medge] (v8) -- (v1);
      \draw[stretchy] (v2) -- (v1);
      \draw[stretchy] (v6) -- (v5);
      \draw (v7) -- (v8);
      \draw (v3) -- (v4);
    \end{tikzpicture}\\
    True pair.&False pair.
  \end{tabular}
  \caption{True and false pairs for $H_i$ in the case $s=2$. White vertices lie in $X_i$; black vertices lie in $Y_i$; square vertices
  lie in $S_i$. Thick lines denote edges in $M_1$, wavy lines denote edges in $M_2$.}
  \label{fig:hipairs}
\end{figure}
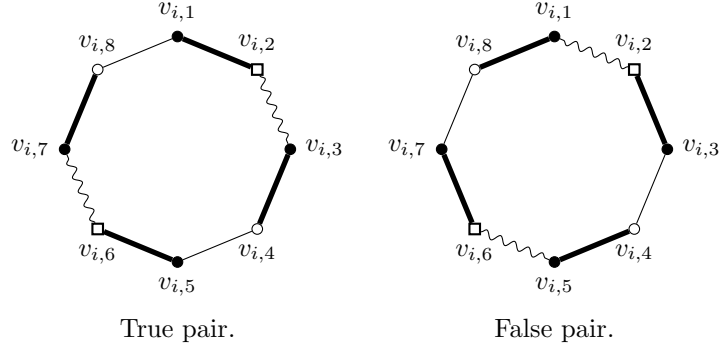

In the full graph $G$, we will not add any new edges incident to the vertices of $X_i$, so
it will still be the case that any $S$-pair in the full graph induces either the true pair
or the false pair in $H_i$. We use these pairs to encode the truth values of the corresponding
3SAT-variables.

For each clause $c_k$, let $L_k$ be a copy of $K_2$, with vertices $w_k, z_k$.
Let $G = \big(\bigcup_j H_j\big) \cup \big(\bigcup_k L_k\big)$. Add edges to $G$
as follows: if the variable $\theta_i$ appears positively in the clause $c_k$, add an
edge from $w_k$ to $v_{i,4k-3}$, and if the variable $\theta_i$ appears negatively in the
clause $c_k$, add an edge from $w_k$ to $v_{i,4k-1}$.

Let $X = \bigcup_j (X_j \cup \{w_j\})$, and let
$Y = V(G)-X$. Observe that $(X,Y)$ is a bipartition of $V(G)$. Let
$S = \left(\bigcup S_j\right) \cup \bigcup\{w_j\}$.
\begin{lemma}
  $G$ has an $S$-pair if and only if the given 3SAT instance is
  satisfiable.
\end{lemma}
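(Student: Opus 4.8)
The plan is to prove both directions by exploiting the rigid ``true pair / false pair'' behaviour of each gadget $H_i$ together with the fact that each clause vertex $w_k$ is forced to act as a selector of a satisfying literal. First I would pin down the structure of an arbitrary $S$-pair $(M_1,M_2)$ in $G$. Because no edge incident to a vertex of $X_i$ was added to $G$, every edge of $M_1$ meeting $X_i$ lies in $E(H_i)$; since $M_1$ saturates the $2s$ vertices of $X_i$, each being matched to one of its two (odd-indexed) neighbours in $H_i$, an injectivity-plus-counting argument forces $M_1 \cap E(H_i)$ to be one of the two perfect matchings of the cycle $H_i$. In particular $M_1$ already covers every odd-indexed vertex of $H_i$, including all clause-attachment vertices $v_{i,4k-3}$ and $v_{i,4k-1}$, so $M_1$ cannot use any edge of the form $w_k v_{i,j}$ and must cover $w_k$ via $w_k z_k$. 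Thus $M_1$ is determined by a truth assignment $\phi$ (reading off a ``true'' or ``false'' matching in each $H_i$) together with all the edges $w_k z_k$. Now consider $M_2$: it must cover $w_k$ yet cannot use $w_k z_k \in M_1$, so it covers $w_k$ by some literal edge, say $w_k v_{i,4k-3}$ with $\theta_i$ positive in $c_k$. The key point is that $M_2$ must also cover $v_{i,4k-2} \in S_i$, whose only two incident edges are $v_{i,4k-3}v_{i,4k-2}$ and $v_{i,4k-2}v_{i,4k-1}$; if $M_1 \cap E(H_i)$ were the ``false'' matching it would contain $v_{i,4k-2}v_{i,4k-1}$, forcing $v_{i,4k-3}v_{i,4k-2} \in M_2$ and colliding with $w_k v_{i,4k-3}$. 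Hence $M_1 \cap E(H_i)$ is the ``true'' matching, $\phi$ makes $\theta_i$ true, and $\theta_i$ satisfies $c_k$; the negative-literal case is symmetric. Since this works for every $k$, $\phi$ satisfies the instance.

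For the converse I would construct an $S$-pair from a satisfying assignment $\phi$. Take $M_1$ to be the ``true'' matching of $H_i$ when $\phi$ makes $\theta_i$ true and the ``false'' matching otherwise, together with all edges $w_k z_k$; this saturates $X$. For $M_2$, in each $H_i$ include the $s$ ``forced'' edges covering $S_i$ (the edges $v_{i,4k-2}v_{i,4k-1}$ in the true state, $v_{i,4k-3}v_{i,4k-2}$ in the false state), and for each clause $c_k$ choose one literal satisfied by $\phi$ and add the corresponding edge $w_k v_{i,4k-3}$ or $w_k v_{i,4k-1}$. One then checks that $M_2$ is a matching: the forced edges of distinct clauses and distinct gadgets occupy disjoint vertex sets by their residues modulo $4$, and the literal edge chosen for $c_k$ uses an attachment vertex whose index ($\equiv 1$ or $\equiv 3 \pmod 4$) is not an endpoint of any forced edge --- which happens exactly because the chosen literal is true, so $H_i$ is in the matching state that leaves that vertex free. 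Finally $M_2$ saturates $S$ and is edge-disjoint from $M_1$, since the forced $H_i$-edges are disjoint from the selected perfect matching of $H_i$ and the literal edges are distinct from all the edges $w_k z_k$.

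I expect the main obstacle to be the bookkeeping in the converse direction: verifying that the union of the forced gadget edges and one literal edge per clause really is a matching. This is where the spacing modulo $4$ of the attachment vertices $v_{i,4k-3}$ and $v_{i,4k-1}$ is used in an essential way, both to keep distinct clauses from clashing on the same gadget and to guarantee that a satisfied literal's attachment vertex is free in that gadget's state. I would handle this cleanly by first tabulating, for a gadget $H_i$ in its ``true'' and ``false'' states, exactly which vertices are covered by $M_1 \cap E(H_i)$ and by the forced part of $M_2$, and then reading off the required disjointness from that table.
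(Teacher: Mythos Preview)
Your proposal is correct and follows essentially the same route as the paper: both directions use the rigid true/false pair structure of each $H_i$, force $w_kz_k\in M_1$, and read off a satisfying literal from the $M_2$-edge at $w_k$. Your argument is in fact more careful than the paper's in two places---you justify $w_kz_k\in M_1$ via coverage of the odd-indexed vertices (the paper writes the slightly loose ``$M_1$ is a perfect matching''), and you spell out the matching verification in the converse direction that the paper dismisses as ``straightforward to check.''
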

\begin{proof}
  Let $(M_1, M_2)$ be an $S$-pair. We show that the 3SAT instance
  is satisfiable. 

  For any variable $\theta_i$, the vertices of $X \cap H_i$ have
  neighborhoods contained in $H_i$. Hence, $(M_1, M_2)$ contains an
  $S_i$-pair, and in particular contains either the true pair or the
  false pair for $H_i$.  Construct an assignment by setting each
  variable $\theta_i$ to be true if $(M_1, M_2)$ contains the true pair
  for $H_i$ and false otherwise. We claim that this is a satisfying
  assignment.

  Consider any clause $c_k$. Since $M_1$ is a perfect matching and $w_k$
  is the only neighbor of $z_k$, we have $w_kz_k \in M_1$. Since $w_k \in S$,
  some edge $w_kv_{i,4k-3}$ or $w_kv_{i,4k-1}$ lies in $M_2$.

  If $w_kv_{i,4k-1} \in M_2$, then $v_{i,4k-2}v_{i,4k-2} \notin M_2$, so the
  given $S$-pair contains the false pair for $H_i$. Since $w_kv_{i,4k-1} \in E(G)$,
  the clause $c_k$ contains a negative instance of $\theta_i$, so the constructed
  assignment satisfies the clause $c_k$. On the other hand, if $w_kv_{i,4k-3} \in M_2$,
  then the given $S$-pair contains the true pair for $H_i$ and $\theta_i$ appears
  postively in $w_k$, so we again see that $w_k$ is satisfied.

  Conversely, suppose that the 3SAT problem has a satisfying
  assignment.  Consider the pair of matchings $(M_1, M_2)$ in $G$
  obtained as follows. For each variable $i$, add the true pair for
  each $H_i$ where $\theta_i$ is true and the false pair for each $H_i$
  where $\theta_i$ is false. For each clause $c_k$, add the edge $w_kz_k$ to $M_1$.
  Choose some variable $\theta_i$ that satisfies the clause $c_k$. If $\theta_i$ is true,
  add the edge $w_kv_{i,4k-3}$ to $M_2$, and otherwise add $w_kv_{i,4k-1}$ to $M_2$.
  It is straightforward to check that this is an $S$-pair for $G$.  
\end{proof}
\begin{corollary}
  \SDM{} is NP-hard.
\end{corollary}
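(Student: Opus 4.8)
The plan is to read off the corollary from the preceding lemma. The lemma exhibits, for every 3SAT instance with clauses $c_1,\ldots,c_s$ and variables $\theta_1,\ldots,\theta_t$, a pair $(G,S)$ such that $G$ has an $S$-pair if and only if the 3SAT instance is satisfiable; in other words, $(G,S)$ is a yes-instance of \SDM{} exactly when the 3SAT instance is a yes-instance. So all that remains is to check that $(G,S)$ is a legitimate \SDM{}-instance and that it can be produced in polynomial time, after which NP-hardness of \SDM{} follows from NP-hardness of 3SAT (which is the classical Cook--Levin theorem together with the standard reduction to 3SAT).

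First I would confirm that $(G,S)$ really is an instance of \SDM{}, i.e.\ that $S \subset X$. This is immediate from the construction in Section~\ref{sec:twomatch}: $S = \bigl(\bigcup_j S_j\bigr) \cup \bigcup_j\{w_j\}$, and each $S_j \subset X_j \subset X$ while each $w_j \in X$ by definition of $X$. Then I would bound the size of the construction: each gadget $H_i$ is a copy of $C_{4s}$, contributing $4s$ vertices and $4s$ edges, each clause gadget $L_k$ contributes $2$ vertices and $1$ edge, and each clause adds at most one further edge joining its $w_k$ to a vertex of the relevant $H_i$. Hence $G$ has $4st + 2s$ vertices and $O(st)$ edges, and both the edge set and the set $S$ can be written down directly from the list of clauses in time polynomial in $s+t$, hence polynomial in the length of the 3SAT instance.

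Combining these two observations, the map sending a 3SAT instance to the pair $(G,S)$ is a polynomial-time many-one reduction from 3SAT to \SDM{}, and the corollary follows. There is no real obstacle here, since the lemma already carries the entire combinatorial content; the only points needing (routine) attention are the verification $S\subset X$ and the polynomial size bound, both of which are transparent from the explicit description of $G$.
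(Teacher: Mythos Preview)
Your argument is correct and matches the paper's intent: the corollary is stated there without proof, as an immediate consequence of the preceding lemma together with the evident polynomial size of the construction. One small slip: each clause $c_k$ contributes up to three extra edges from $w_k$ to the variable gadgets (one per literal), not one, but this does not affect your $O(st)$ bound or the conclusion.
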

Viewing \SDM{} as a special case of \DM{} as in Section~\ref{sec:reduce}, we obtain
the following NP-hardness result for \DM{}.
\begin{corollary}
  \DM{} is NP-hard, even when restricted to instances for which $E(G_1) \subset E(G_2)$.
\end{corollary}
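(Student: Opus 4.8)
The plan is to obtain this corollary by composing the two reductions already in hand. Section~\ref{sec:twomatch} gives a polynomial-time reduction from 3SAT to \SDM{}: from a 3SAT instance $\phi$ it builds an $(X,Y)$-bigraph $G$ together with a set $S\subset X$, and by the Lemma, $G$ has an $S$-pair if and only if $\phi$ is satisfiable. The Theorem of Section~\ref{sec:reduce} then converts $(G,S)$ into a pair $(G_1,G_2)$ with $V(G_1)=V(G_2)=V(G)$, $E(G_1)=E(G)$, and $E(G_2)=E(G)\cup\{xy\st x\in X-S,\ y\in Y\}$; in particular $E(G_1)\subset E(G_2)$ holds automatically by construction. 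By that Theorem, $G$ has an $S$-pair if and only if $G_1,G_2$ contain disjoint matchings each saturating $X$, and this is precisely the \DM{} question for $(G_1,G_2)$. Chaining the two equivalences yields a polynomial-time reduction from 3SAT to \DM{} whose output instances always satisfy $E(G_1)\subset E(G_2)$, which proves the corollary.

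The only hypothesis that must be checked is that the Section~\ref{sec:reduce} Theorem is actually applicable to the instances produced by Section~\ref{sec:twomatch}, i.e.\ that they satisfy $\sizeof{S}<\sizeof{X}-1$. A direct count settles this: if $\phi$ has $s$ clauses and $t$ variables, then each $X_i$ has $2s$ vertices and each $S_i$ has $s$ vertices, while the $s$ clause-vertices $w_k$ lie in both $X$ and $S$; hence $\sizeof{X}=2st+s$ and $\sizeof{S}=st+s$, so $\sizeof{X}-\sizeof{S}=st$. Thus $\sizeof{S}<\sizeof{X}-1$ whenever $st\geq 2$; the sole exception $s=t=1$ is a 3SAT instance consisting of a single clause on a single variable, which is trivially decidable on its own. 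So we may assume $st\geq 2$, and the reduction of Section~\ref{sec:reduce} applies to every relevant instance.

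There is no substantial obstacle here: the corollary is essentially immediate from the two results already proved, and the only point needing care is the size bookkeeping above, which is required because the Section~\ref{sec:reduce} reduction is stated only under $\sizeof{S}<\sizeof{X}-1$. If one wished to sidestep that count entirely, it would suffice to first pad $(G,S)$ by attaching a bounded number of pendant edges at new vertices placed in $X\setminus S$ (with their other endpoints new vertices of $Y$); this enlarges $\sizeof{X}-\sizeof{S}$ without changing whether an $S$-pair exists, and one then applies the Section~\ref{sec:reduce} reduction to the padded instance.
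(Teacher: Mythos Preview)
Your argument is correct and follows the route the paper intends: compose the 3SAT$\to$\SDM{} reduction with the \SDM{}$\to$\DM{} reduction of Section~\ref{sec:reduce}, noting that the latter automatically produces instances with $E(G_1)\subset E(G_2)$. The paper states this corollary without proof, so your care in verifying the hypothesis $\sizeof{S}<\sizeof{X}-1$ via the count $\sizeof{X}-\sizeof{S}=st$ (and handling the trivial $s=t=1$ case) actually fills in a detail the paper leaves implicit.
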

\section{An Algorithm for the Case $\sizeof{S} \geq \sizeof{X}-1$}\label{sec:easycase}
In this section, we provide a polynomial-time algorithm for solving
\SDM{} in the special case $\sizeof{S} \geq \sizeof{X}-1$. Our
algorithm requires the notion of a \emph{$(g,f)$-factor} as well
as the notion of \emph{edge coloring}.
\begin{definition}
  If $G$ is a graph and $g$ and $f$ are functions from $V(G)$ into the
  nonnegative integers, a \emph{$(g,f)$-factor} is a subgraph $H
  \subset G$ such that $g(v) \leq d_H(v) \leq f(v)$ for all $v \in
  V(G)$.
\end{definition}
Lovasz~\cite{Lovasz-gf} gave a Hall-like condition for a graph to have
a $g,f$-factor, and polynomial-time algorithms are known for
determining whether such a factor exists (for example,
\cite{Gabow}). In the bipartite case we are considering here, the
problem of determining whether such a factor exists can also be
reduced to a feasible-flow problem.
\begin{definition}
  For a nonnegative integer $k$, a \emph{$k$-edge coloring} of a graph
  $G$ is a function $f : E(G) \to \{1,\ldots,k\}$ such that $f(e_1)
  \neq f(e_2)$ whenever $e_1, e_2$ are distinct edges sharing an
  endpoint. The \emph{edge-chromatic number} of $G$, written
  $\chi'(G)$, is the smallest integer $k$ such that $G$ has a
  $k$-edge-coloring.
\end{definition}
\begin{theorem}[K\"onig's line-coloring theorem~\cite{konig1916}]
  If $G$ is a bipartite graph, then $\chi'(G)=\Delta(G)$, where $\Delta(G)$
  is the maximum degree of $G$.
\end{theorem}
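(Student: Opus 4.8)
The plan is to prove the nontrivial inequality $\chi'(G)\le\Delta(G)$; the reverse inequality is immediate, since the edges incident to a fixed vertex of degree $\Delta(G)$ must all receive distinct colors. Write $\Delta=\Delta(G)$. I would argue by induction on $\sizeof{E(G)}$, the base case being trivial. For the inductive step, choose any edge $uv\in E(G)$, where $u$ and $v$ necessarily lie in opposite partite sets. The graph $G-uv$ is bipartite with maximum degree at most $\Delta$, so by induction it has a proper edge-coloring $c$ using colors from $\{1,\dots,\Delta\}$. Since $u$ and $v$ each have degree at most $\Delta-1$ in $G-uv$, there is a color $\alpha$ missing at $u$ and a color $\beta$ missing at $v$, where ``missing at $w$'' means that no edge incident to $w$ is colored with it. If $\alpha=\beta$, then coloring $uv$ with $\alpha$ extends $c$ to all of $G$, and we are done.

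So suppose $\alpha\ne\beta$, and let $H$ be the subgraph of $G-uv$ consisting of all edges colored $\alpha$ or $\beta$. Every vertex has at most one incident edge of each of these two colors, so $H$ has maximum degree at most $2$, and each component of $H$ is a path or a cycle. Let $P$ be the component of $H$ containing $u$. Because $\alpha$ is missing at $u$, $u$ has degree at most $1$ in $H$, so $P$ is a path with $u$ as an endpoint, and the edges of $P$ read outward from $u$ are colored $\beta,\alpha,\beta,\alpha,\dots$ (the first edge is colored $\beta$ since $u$ has no incident $\alpha$-edge). The crucial claim is that $v\notin V(P)$. Granting this, swap the colors $\alpha$ and $\beta$ along $P$: this keeps $c$ proper (at each interior vertex of $P$ an $\alpha$-edge and a $\beta$-edge are interchanged, and each endpoint of $P$ has no incident $\alpha$- or $\beta$-edge outside $P$), and now $\beta$ is missing at $u$ and still missing at $v$, so coloring $uv$ with $\beta$ completes the step.

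The main obstacle, and the only place where bipartiteness is used, is the claim $v\notin V(P)$. I would prove it by excluding the two ways $v$ could lie on $P$. First, $v$ cannot be an interior vertex of $P$, since an interior vertex of $P$ has an incident edge of each color $\alpha$ and $\beta$, whereas $\beta$ is missing at $v$. Second, $v$ cannot be the endpoint of $P$ other than $u$: writing $P=p_0p_1\cdots p_\ell$ with $p_0=u$, the $i$th edge $p_{i-1}p_i$ is colored $\beta$ exactly when $i$ is odd; if $v=p_\ell$ then, since $G$ is bipartite and $u$ and $v$ lie in opposite partite sets, the $u$--$v$ path $P$ has odd length $\ell$, so the edge of $P$ incident to $v$ is colored $\beta$, again contradicting that $\beta$ is missing at $v$. (In a non-bipartite graph this parity argument fails, as $P$ could return to $v$ along an $\alpha$-edge, which is precisely why Vizing's theorem gives only $\Delta+1$ colors in general.) This completes the induction.

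An alternative, equally short route avoids induction on edges. First enlarge $G$ to a $\Delta$-regular bipartite (multi)graph $G'$ on a possibly larger vertex set: pad the smaller partite set with isolated vertices so the two sides have equal size, then repeatedly add an edge joining a vertex of $X$ of degree less than $\Delta$ to a vertex of $Y$ of degree less than $\Delta$ until the graph is regular. A $\Delta$-regular bipartite graph satisfies Hall's condition, hence has a perfect matching, and deleting it leaves a $(\Delta-1)$-regular bipartite graph; iterating decomposes $E(G')$ into $\Delta$ perfect matchings. Assigning a distinct color to each matching and restricting to $E(G)\subseteq E(G')$ yields a proper $\Delta$-edge-coloring of $G$.
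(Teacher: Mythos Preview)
Your argument is correct: the Kempe-chain induction is the standard proof, and the parity step using bipartiteness is exactly the right place to invoke the hypothesis. The alternative via embedding into a $\Delta$-regular bipartite multigraph and peeling off perfect matchings is also valid.

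However, there is nothing in the paper to compare against. The paper does not prove K\"onig's line-coloring theorem; it simply states it as a classical result with a citation to~\cite{konig1916} and then applies it as a black box inside the proof of Theorem~\ref{thm:minus1}. So your write-up is not reproducing or deviating from the paper's proof---it is supplying one where the paper intentionally gave none. If the goal is to match the paper, you should likewise cite the result and move on; if the goal is a self-contained exposition, either of your two arguments is fine, with the Hall's-theorem route having the mild advantage of reusing machinery (Hall's condition) already invoked elsewhere in the paper.
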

\begin{theorem}\label{thm:minus1}
  There is a polynomial-time algorithm to solve \SDM{} restricted to instances
  for which $\sizeof{S} \geq \sizeof{X}-1$.
\end{theorem}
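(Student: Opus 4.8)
The plan is to reduce \SDM{} on such instances to the problem of deciding whether a bipartite graph has a suitable $(g,f)$-factor, which is solvable in polynomial time (for instance by Gabow's algorithm~\cite{Gabow}, or by a feasible-flow computation in the bipartite case). Given an instance $(G,S)$ with $\sizeof{S}\geq\sizeof{X}-1$, the set $X-S$ is either empty (when $S=X$) or a single vertex $x^*$. Define $f(v)=2$ for every $v\in V(G)$, and define $g$ by setting $g(x)=2$ for $x\in S$, setting $g(x^*)=1$ in the case $X-S=\{x^*\}$, and setting $g(y)=0$ for $y\in Y$. The heart of the argument is the claim that $G$ has an $S$-pair if and only if $G$ has a $(g,f)$-factor, with the reduction resting on the fact that a bipartite subgraph of maximum degree at most $2$ splits into two matchings by K\"onig's line-coloring theorem.

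To prove the forward direction, given an $S$-pair $(M_1,M_2)$ I would take $H=M_1\cup M_2$: each $x\in S$ is covered by $M_1$ and by $M_2$ through distinct edges (as $M_1\cap M_2=\emptyset$), so $d_H(x)=2$; the vertex $x^*$, if present, is covered by $M_1$, so $d_H(x^*)\geq 1$; and no vertex has degree exceeding $2$ in a union of two matchings, so $H$ is a $(g,f)$-factor. For the reverse direction, given a $(g,f)$-factor $H$ we have $\Delta(H)\leq 2$, so by K\"onig's line-coloring theorem $H$ has a proper $2$-edge-coloring with color classes $A$ and $B$, each a matching. Every $x\in S$ satisfies $d_H(x)=2$ and so is covered by both $A$ and $B$. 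If $S=X$ then $(A,B)$ is an $S$-pair. Otherwise $d_H(x^*)\in\{1,2\}$, so $x^*$ lies on an edge of $A$ or of $B$; relabeling if necessary, we may take $M_1$ to be a class covering $x^*$ and $M_2$ the other class, so that $M_1$ saturates $X$ (it covers $x^*$ and all of $S$), $M_2$ saturates $S$, and $M_1\cap M_2=\emptyset$.

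The computations here are routine, so I expect the only point requiring care to be the handling of the single exceptional vertex $x^*$ in the reverse direction, which is precisely where the hypothesis $\sizeof{S}\geq\sizeof{X}-1$ is used: with at most one vertex of $X$ allowed to miss $M_2$, we are free to choose which color class of $H$ becomes $M_1$. Were two or more vertices excluded from $S$, a $2$-edge-coloring of $H$ could be forced to place their incident edges in different color classes (say along a path component of $H$ whose two endpoints are such vertices, each of degree $1$), and then no relabeling would put all of them in $M_1$; so this factor formulation would break down, in keeping with the NP-hardness established in Section~\ref{sec:twomatch}.
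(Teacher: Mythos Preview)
Your proposal is correct and follows essentially the same approach as the paper: reduce to a bipartite $(g,f)$-factor problem, then split the factor into two matchings via K\"onig's line-coloring theorem, relabeling the color classes to put the exceptional vertex $x^*$ in $M_1$. The only difference is that you set $f(x^*)=2$ while the paper sets $f(x^*)=1$; your choice is arguably cleaner, since in the forward direction $M_1\cup M_2$ can have degree~$2$ at $x^*$ when $M_2$ happens to cover it, whereas the paper implicitly assumes (harmlessly, after trimming $M_2$) that it does not.
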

\begin{proof} 
  To avoid triviality, assume that $\sizeof{X} > 1$. Define functions $f$ and $g$
  as follows.
  \begin{align*}
    f(v) &=
    \begin{cases}
      1, &\text{if $v\in X-S$,} \\
      2, &\text{otherwise}
    \end{cases}\\    
    g(v) &=
    \begin{cases}
      f(v),& \text{if $v \in X$,} \\
      0,& \text{otherwise.}
    \end{cases}
  \end{align*}
  We can check in polynomial time whether $G$ has a $(g,f)$-factor. On
  the other hand, any $(g,f)$-factor $H$ has maximum degree $2$, and
  thus satisfies $\chi'(H) = 2$, by K\"onig's line-coloring
  theorem. Since $d_H(v) = 2$ for all $v \in S$, any $2$-edge-coloring
  of $H$ uses colors $\{1,2\}$ at each vertex of
  $S$. Furthermore, if $X-S\neq \emptyset$, then by switching
  colors if necessary, we can assume that the vertex in $X-S$ has only
  $1$ as an incident color. Taking $M_1$ and $M_2$ to consist of the
  edges of color $1$ and $2$ respectively, we see that $(M_1, M_2)$ is
  an $S$-pair in $G$. Conversely, if $(M'_1, M'_2)$ is any $S$-pair in $G$,
  then $M'_1 \cup M'_2$ is a $(g,f)$-factor.

  Hence, $G$ has a $(g,f)$-factor if and only if $G$ has an $S$-pair,
  so checking for such a factor solves the problem in polynomial time.
\end{proof}
For any fixed $k$, the problem \SDM{} is polynomial-time solvable on
instances with $\sizeof{S} \leq k$: we can iterate over the
$O(\sizeof{Y}^k)$ possible choices for $M_2$, and for each possible
choice, check whether $G-M_2$ has a perfect matching $M_1$. Since the
reduction in Section~\ref{sec:twomatch} produces \SDM{} instances in
which $\sizeof{X-S}$ is arbitrarily large, Theorem~\ref{thm:minus1}
suggests that \SDM{} might also be polynomially solvable when
$\sizeof{S}$ is bounded less strongly from below. However, the trick
of using $(g,f)$-factors is no longer sufficient by itself to solve
the problem when $k > 1$.
\begin{question}
  For fixed $k > 1$, is there a polynomial-time algorithm to solve
  \SDM{} on instances with $\sizeof{S} \geq \sizeof{X}-k$?
\end{question}
\bibliographystyle{amsplain} \bibliography{biblio}

\providecommand{\bysame}{\leavevmode\hbox to3em{\hrulefill}\thinspace}
\providecommand{\MR}{\relax\ifhmode\unskip\space\fi MR }
\providecommand{\MRhref}[2]{%
  \href{http://www.ams.org/mathscinet-getitem?mr=#1}{#2}
}
\providecommand{\href}[2]{#2}
\begin{thebibliography}{10}

\bibitem{bipmatch1}
H.~Alt, N.~Blum, K.~Mehlhorn, and M.~Paul, \emph{Computing a maximum
  cardinality matching in a bipartite graph in time {$O(n^{1.5}\sqrt{m/\log
  n})$}}, Inform. Process. Lett. \textbf{37} (1991), no.~4, 237--240.
  \MR{1095712 (91m:68141)}

\bibitem{frieze}
A.~M. Frieze, \emph{Complexity of a {$3$}-dimensional assignment problem},
  European J. Oper. Res. \textbf{13} (1983), no.~2, 161--164. \MR{708379
  (84i:68064)}

\bibitem{Fulkerson}
D.~R. Fulkerson, \emph{The maximum number of disjoint permutations contained in
  a matrix of zeros and ones}, Canad. J. Math. \textbf{16} (1964), 729--735.
  \MR{0168583 (29 \#5843)}

\bibitem{Gabow}
Harold~N. Gabow, \emph{An efficient reduction technique for degree-constrained
  subgraph and bidirected network flow problems}, Proceedings of the Fifteenth
  Annual ACM Symposium on Theory of Computing (New York, NY, USA), STOC '83,
  ACM, 1983, pp.~448--456.

\bibitem{HallsTheorem}
Philip Hall, \emph{On representatives of subsets}, J. London Math. Soc
  \textbf{10} (1935), no.~1, 26--30.

\bibitem{bipmatch2}
John~E. Hopcroft and Richard~M. Karp, \emph{An {$n^{5/2}$} algorithm for
  maximum matchings in bipartite graphs}, SIAM J. Comput. \textbf{2} (1973),
  225--231. \MR{0337699 (49 \#2468)}

\bibitem{kamalian}
R.~R. Kamalian and V.~V. Mkrtchyan, \emph{On complexity of special maximum
  matchings constructing}, Discrete Math. \textbf{308} (2008), no.~10,
  1792--1800. \MR{2394447 (2009d:05197)}

\bibitem{konig1916}
D{\'e}nes K{\"o}nig, \emph{Graphok {\'e}s alkalmaz{\'a}suk a determin{\'a}nsok
  {\'e}s a halmazok elm{\'e}let{\'e}re}, Mathematikai {\'e}s
  Term{\'e}szettudom{\'a}nyi Ertesito \textbf{34} (1916), 104--119.

\bibitem{Lebensold}
Kenneth Lebensold, \emph{Disjoint matchings of graphs}, J. Combinatorial Theory
  Ser. B \textbf{22} (1977), no.~3, 207--210. \MR{0450138 (56 \#8435)}

\bibitem{Lovasz-gf}
L{\'a}szl{\'o} Lov{\'a}sz, \emph{Subgraphs with prescribed valencies}, J.
  Combinatorial Theory \textbf{8} (1970), 391--416. \MR{0265201 (42 \#113)}

\end{thebibliography}
\end{document}